\newtheorem{theorem}{Theorem}
\newtheorem{corollary}{Corollary}
\title{Proof that the Kalman gain minimizes the generalized variance}
\author{Eviatar Bach\footnote{Email: eviatarbach@protonmail.com}}
\begin{document}
\maketitle

\begin{abstract}
The optimal gain matrix of the Kalman filter is often derived by minimizing the trace of the posterior covariance matrix. Here, I show that the Kalman gain also minimizes the determinant of the covariance matrix, a quantity known as the generalized variance. When the error distributions are Gaussian, the differential entropy is also minimized.
\end{abstract}

For the classic Kalman filter, the Kalman gain is
\begin{equation}
\mathbf{K} = \mathbf{P}^\text{f}\mathbf{H}^\top(\mathbf{H}\mathbf{P}^\text{f}\mathbf{H}^\top + \mathbf{R})^{-1}\label{eq:gain},
\end{equation}
where $\mathbf{P}^\text{f}$ is the background error covariance matrix (also called the prior error covariance matrix), $\mathbf{R}$ is the observation error covariance matrix, and $\mathbf{H}$ is the linearized observation operator \parencite{asch_data_2016}.

Eq. (\ref{eq:gain}) can be shown to minimize $\operatorname{tr}(\mathbf{P}^\text{a})$, where $\mathbf{P}^\text{a}$ is the analysis error covariance matrix (also called the posterior error covariance matrix). In fact, Eq. (\ref{eq:gain}) is often derived by minimizing $\operatorname{tr}(\mathbf{P}^\text{a})$ \parencite{jazwinski_stochastic_1970,todling_estimation_1999,kalnay_atmospheric_2002,asch_data_2016}. $\mathbf{P}^\text{a}$ is the updated error covariance after observations are assimilated, and can be expressed as
\begin{equation}
\mathbf{P}^\text{a} = \mathbf{(I - KH)}\mathbf{P}^\text{f}\mathbf{(I - KH)^\top + KRK^\top}.
\end{equation}
Note that this formula for the updated covariance matrix applies to any gain $\mathbf{K}$, not necessarily optimal \parencite{asch_data_2016}. It is sometimes known as the Joseph form of the covariance update equation \parencite{todling_estimation_1999}. Here I use the notation common in the atmospheric sciences \parencite{ide_unified_1997}; in other fields, there are a variety of notations, but $\mathbf{P}^\text{f}$ is sometimes written as $\mathbf{P}_{k|k-1}$ and $\mathbf{P}^\text{a}$ as $\mathbf{P}_{k|k}$.

The trace of a covariance matrix is sometimes known as the total variance. However, this measure of the dispersion of the probability distribution does not take into account the cross-covariances between variables. Another scalar measure of dispersion is the \textit{generalized variance}, introduced by Samuel S. Wilks, which is defined as the determinant of the covariance matrix. The generalized variance has been used in a wide variety of fields as a measure of multidimensional scatter \parencite{kocherlakota_generalized_1983,gupta_generalized_2006,chen_insights_2016}.

Since the determinant of a positive definite matrix is lesser or equal to the product of its diagonal elements (this follows from Hadamard's inequality \parencite{horn_matrix_2013}), nonzero off-diagonal elements decrease the generalized variance. This captures the intuition that an error distribution with correlated errors is ``less dispersed'' than the same distribution with uncorrelated errors, a property that the total variance does not possess. For a multivariate Gaussian, contours of the probability density function are ellipsoids of constant Mahalanobis distance from the mean. The volume of these ellipsoids is proportional to the square root of the generalized variance. Increasing correlations between the variables makes the ellipsoid more elongated, and decreases the volume \parencite{wilks_statistical_2019}. The trace and determinant of $\mathbf{P}^\text{a}$ were compared as measures of Kalman filter performance in \parencite{yang_performance_2012}.

One might ask, then, what Kalman gain is obtained by minimizing the generalized variance of the analysis (posterior) distribution. Here, I prove that the Kalman gain in Eq. (\ref{eq:gain}) in fact also minimizes the generalized variance.

\begin{theorem}
The Kalman gain in Eq. (\ref{eq:gain}) minimizes the generalized variance.
\end{theorem}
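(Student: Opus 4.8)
The plan is to prove something stronger than the theorem asks, namely that the Kalman gain minimizes $\mathbf{P}^\text{a}$ in the Loewner (positive semidefinite) ordering; minimization of the generalized variance then follows in one line, and in fact so does minimization of the trace and of the Gaussian differential entropy. Write $\mathbf{S} := \mathbf{H}\mathbf{P}^\text{f}\mathbf{H}^\top + \mathbf{R}$ for the innovation covariance. I assume, as is standard (and as is already implicit in Eq.~(\ref{eq:gain}), which requires $\mathbf{S}^{-1}$), that $\mathbf{R}$ is positive definite; since $\mathbf{H}\mathbf{P}^\text{f}\mathbf{H}^\top \succeq 0$, this makes $\mathbf{S}$ positive definite. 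Let $\mathbf{K}^\star := \mathbf{P}^\text{f}\mathbf{H}^\top\mathbf{S}^{-1}$ be the gain of Eq.~(\ref{eq:gain}), and let $\mathbf{K}$ denote an arbitrary gain matrix of the same shape.

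First I would multiply out the Joseph form, using symmetry of $\mathbf{P}^\text{f}$, to obtain
\begin{equation}
\mathbf{P}^\text{a}(\mathbf{K}) = \mathbf{P}^\text{f} - \mathbf{K}\mathbf{H}\mathbf{P}^\text{f} - \mathbf{P}^\text{f}\mathbf{H}^\top\mathbf{K}^\top + \mathbf{K}\mathbf{S}\mathbf{K}^\top .
\end{equation}
Then I would complete the square in $\mathbf{K}$ with respect to the weight $\mathbf{S}$, which gives the identity
\begin{equation}
\mathbf{P}^\text{a}(\mathbf{K}) = \bigl(\mathbf{P}^\text{f} - \mathbf{P}^\text{f}\mathbf{H}^\top\mathbf{S}^{-1}\mathbf{H}\mathbf{P}^\text{f}\bigr) + (\mathbf{K} - \mathbf{K}^\star)\,\mathbf{S}\,(\mathbf{K} - \mathbf{K}^\star)^\top .
\end{equation}
The parenthesized term is precisely $\mathbf{P}^\text{a}(\mathbf{K}^\star) = (\mathbf{I} - \mathbf{K}^\star\mathbf{H})\mathbf{P}^\text{f}$ (set $\mathbf{K} = \mathbf{K}^\star$ to see the quadratic term vanish), so
\begin{equation}
\mathbf{P}^\text{a}(\mathbf{K}) - \mathbf{P}^\text{a}(\mathbf{K}^\star) = (\mathbf{K} - \mathbf{K}^\star)\,\mathbf{S}\,(\mathbf{K} - \mathbf{K}^\star)^\top \succeq 0 ,
\end{equation}
because $\mathbf{S} \succ 0$. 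Hence $\mathbf{P}^\text{a}(\mathbf{K}) \succeq \mathbf{P}^\text{a}(\mathbf{K}^\star)$ for every gain $\mathbf{K}$.

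It remains to move from this matrix inequality to the determinants. Both matrices are positive semidefinite covariance matrices, and if $\mathbf{A} \succeq \mathbf{B} \succeq 0$ then $\det\mathbf{A} \ge \det\mathbf{B}$: by Weyl's monotonicity inequality $\lambda_i(\mathbf{A}) \ge \lambda_i(\mathbf{B}) \ge 0$ for each $i$ \parencite{horn_matrix_2013}, so the nonnegative eigenvalue products compare in the same direction. Applying this with $\mathbf{A} = \mathbf{P}^\text{a}(\mathbf{K})$ and $\mathbf{B} = \mathbf{P}^\text{a}(\mathbf{K}^\star)$ yields $\det\mathbf{P}^\text{a}(\mathbf{K}) \ge \det\mathbf{P}^\text{a}(\mathbf{K}^\star)$, i.e.\ the Kalman gain minimizes the generalized variance. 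The same monotonicity gives $\operatorname{tr}\mathbf{P}^\text{a}(\mathbf{K}) \ge \operatorname{tr}\mathbf{P}^\text{a}(\mathbf{K}^\star)$ and, since the Gaussian differential entropy is an increasing function of $\det\mathbf{P}^\text{a}$, the entropy claim of the abstract as well. When $\mathbf{P}^\text{a}(\mathbf{K}^\star)$ is nonsingular the minimizer is unique, since $(\mathbf{K}-\mathbf{K}^\star)\mathbf{S}(\mathbf{K}-\mathbf{K}^\star)^\top = 0$ with $\mathbf{S} \succ 0$ forces $\mathbf{K} = \mathbf{K}^\star$.

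I do not expect a substantial obstacle: the real content is the observation that completing the square proves a \emph{matrix} identity, not merely the scalar one used in the usual trace derivation, so that every order-monotone functional of $\mathbf{P}^\text{a}$ is handled at once. The points that need care are (i) the standing hypothesis $\mathbf{R} \succ 0$ — without it the weight $\mathbf{S}$ is only positive semidefinite, Eq.~(\ref{eq:gain}) must be read with a pseudoinverse, and uniqueness can fail — and (ii) invoking the correct direction of determinant monotonicity on the positive semidefinite cone. A purely variational alternative (differentiate $\det\mathbf{P}^\text{a}$ via $\partial\det(\mathbf{X}) = \det(\mathbf{X})\operatorname{tr}(\mathbf{X}^{-1}\,\partial\mathbf{X})$ and set the gradient to zero) recovers the same stationarity equation as the trace minimization, but would still need a separate convexity or second-derivative argument to certify a global minimum; the Loewner-order route avoids that entirely.
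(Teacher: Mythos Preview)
Your proof is correct and takes a genuinely different route from the paper's. The paper proceeds variationally: it differentiates $\log\det\mathbf{P}^\text{a}$ via Jacobi's formula, expands $\mathrm{d}\mathbf{P}^\text{a}$ from the Joseph form, collects terms using trace identities, sets the resulting gradient to zero, and solves for $\mathbf{K}$ --- precisely the alternative you sketch in your final paragraph. Your completion-of-the-square argument instead establishes $\mathbf{P}^\text{a}(\mathbf{K}) \succeq \mathbf{P}^\text{a}(\mathbf{K}^\star)$ in the Loewner order and then invokes determinant monotonicity. This buys you more: the paper's calculation only locates a stationary point and never verifies second-order conditions or otherwise certifies a global minimum, whereas your argument delivers global optimality directly, handles every Loewner-monotone scalarization (trace, determinant, Gaussian entropy) at once, gives a clean uniqueness statement, and avoids matrix calculus entirely. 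The paper's approach, for its part, stays closer in spirit to the textbook trace-minimization derivation and makes explicit that the \emph{same} first-order condition arises from either objective, which is perhaps the point the author most wanted to convey.
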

\begin{proof}
We wish to find the optimal gain $\mathbf{K^*}$ by minimizing the generalized variance of the analysis distribution, defined as the determinant of the analysis covariance matrix:
\begin{align}
\mathbf{K^*} &= {\arg\min}_\mathbf{K} \det(\mathbf{P}^\text{a})\\
&= {\arg\min}_\mathbf{K} \log(\det(\mathbf{P}^\text{a})),
\end{align}
where the second line follows from the fact that $\mathbf{P}^\text{a}$ is positive definite and from the monotonicity of log.

Using an identity of matrix calculus \parencite{petersen_matrix_2012},
\begin{equation}
\text{d}\log(\det(\mathbf{P}^\text{a})) = \operatorname{tr}((\mathbf{P}^\text{a})^{-1} \text{d}\mathbf{P}^\text{a}).\label{eq:jacobi}
\end{equation}

Then $\text{d}\mathbf{P}^\text{a}$ can be expanded:
\begin{equation}
\begin{aligned}
\text{d}\mathbf{P}^\text{a} =& -\mathbf{P}^\text{f}\mathbf{H}^\top \text{d}\mathbf{K}^\top - (\text{d}\mathbf{K}) \mathbf{H}\mathbf{P}^\text{f} + (\text{d}\mathbf{K}) \mathbf{H} \mathbf{P}^\text{f} \mathbf{H^\top K^\top} + \mathbf{KH}\mathbf{P}^\text{f}\mathbf{H}^\top \text{d}\mathbf{K}^\top\\
&\qquad+ (\text{d}\mathbf{K}) \mathbf{R K^\top} + \mathbf{KR}\;\text{d}\mathbf{K}^\top.
\end{aligned}\label{eq:dPa}
\end{equation}

Substituting Eq. (\ref{eq:dPa}) into Eq. (\ref{eq:jacobi}) and simplifying using properties of the trace (distributive, $\operatorname{tr}(\mathbf{A}^\top) = \operatorname{tr}(\mathbf{A})$, and invariant under cyclic permutation when sizes are commensurate) as well as symmetry of the covariance matrices, we obtain
\begin{equation}
\text{d} \log(\det(\mathbf{P}^\text{a})) = \operatorname{tr}\left((2\mathbf{KH\mathbf{P}^\text{f}H^\top} + 2\mathbf{KR} - 2\mathbf{\mathbf{P}^\text{f}H^\top})^\top(\mathbf{P^\text{a}})^{-1}\text{d}\mathbf{K}\right)
\end{equation}

This is equivalent to
\begin{equation}
\frac{\text{d}}{\text{d}\mathbf{K}} \log(\det(\mathbf{P}^\text{a})) = (\mathbf{P}^\text{a})^{-1}(2\mathbf{ KH\mathbf{P}^\text{f}H^\top} + 2 \mathbf{KR} - 2 \mathbf{\mathbf{P}^\text{f}H^\top}).\label{eq:derivative}
\end{equation}

This result can also be verified with \url{http://www.matrixcalculus.org/} \parencite{laue_matrixcalculusorg_2020}.

Setting Eq. (\ref{eq:derivative}) to $\mathbf{0}$ in order to optimize, and since $(\mathbf{P}^\text{a})^{-1} \neq \mathbf{0}$,
\begin{equation}
\mathbf{K^*H\mathbf{P}^\text{f}H^\top + K^*R - \mathbf{P}^\text{f}H^\top = 0}.
\end{equation}

Solving for $\mathbf{K^*}$, we recover Eq. (\ref{eq:gain}).
\end{proof}

\begin{corollary}
Under the assumption of Gaussian error distributions, Eq. (\ref{eq:gain}) also minimizes the differential entropy of the analysis distribution.
\end{corollary}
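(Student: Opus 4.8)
The plan is to reduce the corollary to the Theorem by way of the closed-form expression for the differential entropy of a Gaussian. Under Gaussian background and observation errors with a linear $\mathbf{H}$, the analysis distribution is itself Gaussian --- this is the standard conjugacy of the Gaussian family under linear--Gaussian observation models, and it holds for the updated distribution produced by \emph{any} gain $\mathbf{K}$, whose covariance is the $\mathbf{P}^\text{a}$ of the Joseph form. For an $n$-dimensional Gaussian with positive-definite covariance $\mathbf{P}^\text{a}$, the differential entropy is
\[
h(\mathbf{P}^\text{a}) = \frac{n}{2}\log(2\pi e) + \frac{1}{2}\log\det(\mathbf{P}^\text{a}).
\]
First I would record this identity, noting that $\mathbf{P}^\text{a}$ is positive definite (as used in the proof of the Theorem), so that $h$ is finite and well-defined.

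Next I would observe that, viewed as a function of $\mathbf{K}$, the term $\frac{n}{2}\log(2\pi e)$ is constant, and $h(\mathbf{P}^\text{a})$ is therefore a strictly increasing affine function of $\log\det(\mathbf{P}^\text{a})$. Hence ${\arg\min}_\mathbf{K}\, h(\mathbf{P}^\text{a}) = {\arg\min}_\mathbf{K}\, \log\det(\mathbf{P}^\text{a}) = {\arg\min}_\mathbf{K}\, \det(\mathbf{P}^\text{a})$, the last equality being exactly the reformulation invoked at the start of the proof of the Theorem. Applying the Theorem, the minimizer is the Kalman gain of Eq. (\ref{eq:gain}).

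The hard part here is essentially nonexistent: the only points that need care are the justification that the analysis distribution is Gaussian (conjugacy under the linear--Gaussian model) and that $\mathbf{P}^\text{a}$ remains positive definite so that the entropy formula applies --- both already in hand. Everything else is the monotone correspondence between differential entropy and generalized variance for Gaussians, so the corollary follows immediately from the Theorem.
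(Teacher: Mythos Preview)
Your proposal is correct and follows essentially the same approach as the paper: invoke the Gaussian differential-entropy formula, discard the constant term, note that Gaussianity of the analysis follows from Gaussianity of the background and observation errors, and conclude that minimizing the entropy reduces to minimizing the generalized variance, which is handled by the Theorem. Your write-up is slightly more explicit about positive-definiteness and the chain of ${\arg\min}$ equalities, but the substance is identical.
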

\begin{proof}
For a multivariate Gaussian with covariance matrix $\pmb{\Sigma}$ and dimension $N$, the differential entropy is \parencite{cover_elements_2006}
\begin{equation}
\frac{N}{2}\log(2\pi e) + \frac{1}{2}\log(\det(\pmb{\Sigma})).
\end{equation}
The first term is a constant and can thus be neglected in the minimization.

If the background and observation error distributions are Gaussian, the analysis error distribution will also be Gaussian. Thus, under the Gaussian assumption, minimizing the generalized variance is equivalent to minimizing the differential entropy.
\end{proof}
Furthermore, for Gaussian error distributions the related quantities of R\'enyi entropy and the entropy power are also minimized \parencite{chen_insights_2016}.

\printbibliography

\end{document}